\newtheorem{theorem}{Theorem}
\newtheorem{lemma}[theorem]{Lemma}
\theoremstyle{definition}
\DeclareMathOperator{\sgn}{sign}
\DeclareMathOperator{\f}{f}
\DeclareMathOperator{\g}{g}
\DeclareMathOperator{\Fc}{F}
\DeclareMathOperator{\R}{R}
\newcommand{\be}{\mathbf{e}}
\newcommand{\bg}{\mathbf{g}}
\newcommand{\bq}{\mathbf{q}}
\newcommand{\br}{\mathbf{r}}
\newcommand{\bs}{\mathbf{s}}
\newcommand{\bu}{\mathbf{u}}
\newcommand{\bv}{\mathbf{v}}
\newcommand{\bx}{\mathbf{x}}
\newcommand{\by}{\mathbf{y}}
\newcommand{\bz}{\mathbf{z}}
\newcommand{\bG}{\mathbf{G}}
\newcommand{\bW}{\mathbf{W}}
\begin{document}

\title{A Novel Key Generation Scheme Using Quaternary PUF Responses and Wiretap Polar Coding}

\author{Yonghong~Bai,~\IEEEmembership{Student~Member,~IEEE,}
Zhiyuan~Yan,~\IEEEmembership{Senior~Member,~IEEE}


\thanks{Yonghong Bai and Zhiyuan Yan are with the Department
of Electrical and Computer Engineering, Lehigh University, Bethlehem,
PA, 18015 USA (e-mail: yob216@lehigh.edu; zhy6@lehigh.edu).
}

}

\markboth{}%
{Shell \MakeLowercase{\textit{et al.}}: Bare Demo of IEEEtran.cls for IEEE Journals}

\maketitle


\begin{abstract} Physical unclonable functions (PUFs) are widely considered in secret key generation for resource constrained devices. However, PUFs require additional hardware overhead. In this paper, we focus on developing a PUF-efficient, robust, and secure key generation scheme. First, a novel method for extracting quaternary PUF responses is proposed to increase the entropy of a PUF response, in which a 2-bit response is extracted from evaluating a single PUF cell multiple times. The probability masses of the responses can be adjusted by setting parameters appropriately. Then, a chosen secret model based fuzzy extractor (FE) is designed to extract secret keys from the quaternary PUF responses. To improve the security of this FE, it is modeled as a wiretap channel system, and wiretap polar coding is adopted to reduce secrecy leakage. An upper bound of secrecy leakage is also given in this paper, and it suggests that an arbitrarily small (even zero) leakage can be achieved by properly choosing parameters of the quaternary PUF responses generation. Comparison results show that the required number of PUF cells to achieve the same level of secrecy in our scheme is as low as half that of the state-of-the-art schemes.  




\end{abstract}

\begin{IEEEkeywords}
physical unclonable functions,
quaternary PUF response,
wiretap polar coding.
\end{IEEEkeywords}

\IEEEpeerreviewmaketitle

\section{Introduction} 
\label{Introduction}

\IEEEPARstart{P}{hysical} unclonable functions (PUFs) are used in secret key generations \cite{delvaux2014helper}  \cite{suzuki2018quaternary}\cite{maes2015secure}\cite{ueno2019tackling}. PUF outputs depend on random physical factors introduced in manufacturing process and environmental noise. These factors make the output of PUFs unpredictable and unclonable. To use PUFs in key generation, two important issues need to be addressed: not precisely reproducible and biased PUF outputs.


Fuzzy extractors (FEs) \cite{dodis2004fuzzy} are designed to derive keys from unstable and biased PUF bits. As shown in Fig. \ref{fz}, a secret seed is a random string that is used to derive an enrolled key through the key derivation function (KDF). Firstly, the FE generates a helper data using a PUF string and the codeword based on the secret seed, and then a legal user reconstructs the secret seed using the helper data and a noisy PUF string. Since the information of the secret seed is leaked from the helper data if the PUF bits are biased \cite{delvaux2014helper}\cite{maes2015secure}, debiasing was proposed to mitigate or reduce the secrecy leakage of FEs, such as Von Neumann correctors \cite{maes2015secure} and biased-masking \cite{ueno2019tackling}. In these PUF based key generation schemes, one PUF cell returns only one bit data upon an evaluation. If the PUF bits are not reusable for multi-enrollment, the number of PUF cells increases with the number of generated keys, which is undesirable for resource constrained devices. Therefore, a PUF-efficient key generation scheme is very important. A direct way to increase the PUF efficiency is that more bit data are derived from a single PUF cell. A method of the quaternary PUF response derivation is introduced in \cite{suzuki2018quaternary}. However, the quaternary responses are used for debiasing, rather than generating keys directly. In this paper, we propose an FE in which the quaternary PUF responses are directly used for generating keys. Like the binary PUF outputs, quaternary PUF responses are not precisely reproducible and biased. Hence, we use quaternary polar codes to hide the secrecy leakage caused by the biased quaternary responses and robustly generate keys. To the best of our knowledge, it is the first non-binary fuzzy extractor that derives secret keys from quaternary PUF responses. The main contributions of this paper are as follows:

\begin{figure}
\centering
\includegraphics[width=3.3in]{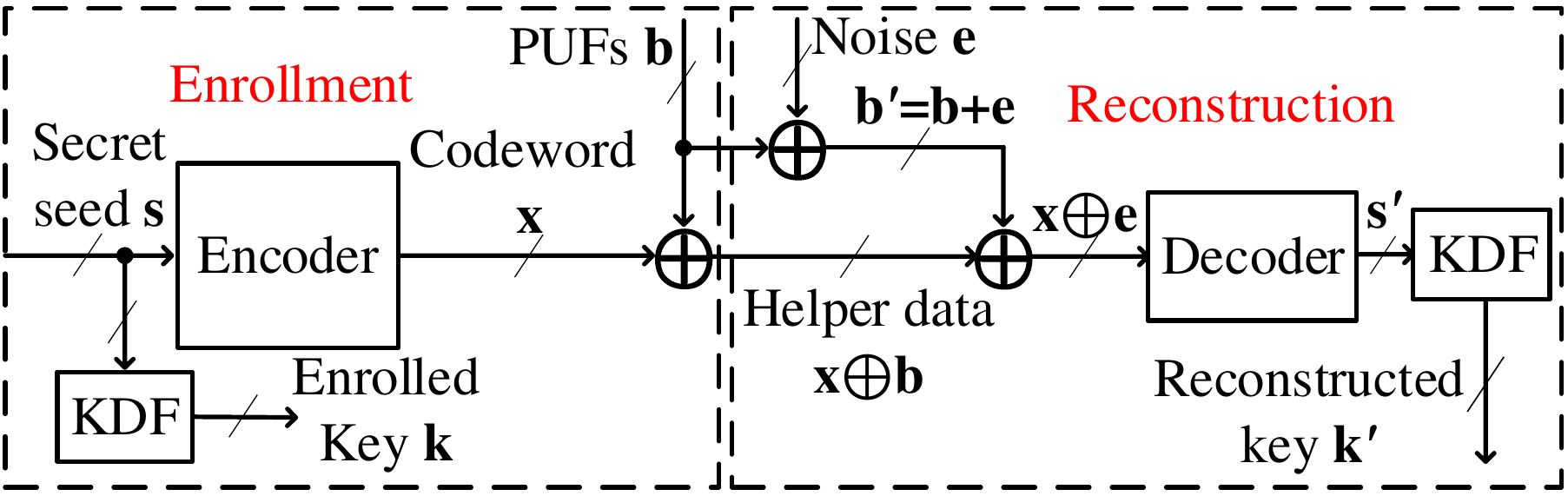}
\caption {Fuzzy extractor.}
\label{fz}
\end{figure}



\begin{enumerate}
\item A quaternary response is extracted from a single PUF cell based on the PUF model in \cite{maes2013accurate}. In this method, the probability masses of quaternary responses is adjusted by properly setting  parameters.    

\item A chosen secret model based FE is designed using quaternary PUF responses and modeled as a wiretap channel system to analyze its security. 

\item  Wiretap polar coding is adopted in the FE. A new upper bound on secrecy leakage is derived. The upper bound becomes tighter when the mask length increases, and can be made arbitrarily small (even zero) by properly choosing parameters of the quaternary PUF responses generation.
\end{enumerate}
Comparison results show that the required number of PUF cells to achieve the same level of secrecy in our scheme is as low as half that of the state-of-the-art schemes.

\section{Reviews}
\label {Reviews}

\subsection{Probabilistic reliability model of PUFs}

To describe probabilistic behavior of a PUF cell, it is sufficient to use {\it one-probability} (a random variable), which is the probability that the PUF cell returns `$1$' upon a random evaluation \cite{maes2013accurate}. The {\it one-probabilities} of different PUF cells are independent and identically distributed random variables with probability density function (PDF) $\f(x) = \frac{\lambda_1 \varphi(\lambda_2 - \lambda_1 \Phi^{-1}(x))}{\varphi (\Phi^{-1} (x))}$ and cumulative distribution function (CDF) \mbox{$\Fc(x) = \Phi(\lambda_1 \Phi^{-1}(x) - \lambda_2)$}, where $\Phi$ is the standard normal cumulative distribution function, $\varphi$ is the standard normal probability density function with two parameters $\lambda_1$ and $\lambda_2$ (see Fig. \ref{pdf} for an example).

\begin{figure}
\centering
\includegraphics[width=3.3in]{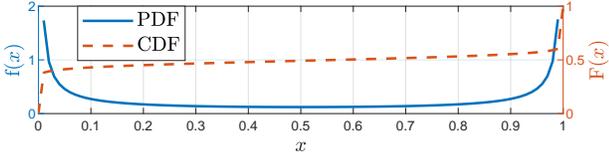}
\caption {The probability density function and cumulative distribution function of {\it one-probability} with $\lambda_1 = 0.1213$ and $\lambda_2 = 0.0210$.}
\label{pdf}
\end{figure}



\subsection{Wiretap polar coding} 
In polar codes, $N$ independent and identical DMCs are used to generalize $N$ synthesized channels whose capacities are polarized \cite{arikan2009channel}. The synthesized channels with high capacities are called good channels, while others with relatively low capacities are called bad channels. Information symbols are sent through the good channels, and hence the receiver can recover the information with high probability. The generator matrix of a polar code with length $N$ is an $N \times N$ matrix $\bG$ \cite{cheng2016encoder}. Let $\bu=\begin{bmatrix}u_0&u_1&\cdots &u_{N-1}\end{bmatrix}$ be a source word, and its codeword $\bx=\bu \bG$.

In a wiretap channel, Alice sends messages to Bob through main channels, while an eavesdropper Eve receives the messages from wiretap channels. Wiretap polar coding is proposed to protect the communications between Alice and Bob in \cite{torfi2017polar}. If the wiretap channel is stochastically degraded with respect to the main channel, Bob can recover the sent messages with high probability, whereas the probability that Eve can recover the message approaches zero \cite{torfi2017polar}. Let the transition matrices of the wiretap and main channels be $\bW_w$ and $\bW_m$, respectively. If there is a channel with a transition matrix $\bW_3$ that holds $\bW_w = \bW_3 \bW_m$, the wiretap channel will be stochastically degraded with respect to the main channel so that the channels that are good for Eve are good for Bob too and the channels that are bad for Bob are bad for Eve too \cite{torfi2017polar}. Hence, in the wiretap polar coding, indices $\{0,1, \cdots, N-1\}$ is divided into three disjoint subsets:
\begin{itemize}
  \item[] $\mathcal{R}$: Indices of the channels good for both Bob and Eve,
  \item[] $\mathcal{S}$: Indices of the channels good for Bob but bad for Eve,
  \item[] $\mathcal{F}$: Indices of the channels bad for both Bob and Eve.
\end{itemize}

The source word $\bu$ of polar codes is also partitioned into three parts: mask $\br$, secret message $\bs$, and frozen symbols $\mathbf{f}$, and they are sent through the synthesized channels that are index by $\mathcal{R}$, $\mathcal{S}$, and $\mathcal{F}$, respectively. Mask symbols are sent through the channels that are good for both Bob and Eve, and hence Eve would recover the mask symbols with high probability. However, the mask symbols are useless random symbols. The secret message is sent through the channels good for Bob but bad for Eve, and hence only Bob can recover the secret message with high probability.

\section{Quaternary fuzzy extractors} 
\label {Quaternary_FE}

In this section, we will introduce our PUF-efficient key generation scheme. First, we present the quaternary PUF responses generation method, and then we give the quaternary fuzzy extractor setup and its corresponding wiretap channel model. 

\subsection{Extraction of quaternary PUF responses}  
Similar to the method in \cite{suzuki2018quaternary}, the extraction of the quaternary PUF response of a PUF cell is based on its {\it one-frequency} (the proportion of `$1$'s in multiple PUF cell evaluations). The quaternary response of the $i$-th PUF cell
\begin{align*}
\small
q_i(f_i) = 
\begin{cases} 
0,\mbox{ if } f_i \in [0,a), \\
1,\mbox{ if } f_i \in [a,b), \\
2,\mbox{ if } f_i \in [b,c), \\
3,\mbox{ if } f_i \in [c,1], 
\end{cases}
\end{align*} where $0$, $1$, $2$, and $3$ are four elements of Galois field GF(4), $a < b < c$ are parameters, and $f_i$ is the {\it one-frequency} of the \mbox{$i$-th} PUF cell. Let $p_0$, $p_1$, $p_2$, and $p_3$ be the probabilities of the quaternary PUF response being $0$, $1$, $2$, and $3$, respectively. As the number of trials increases towards infinity, the {\it one-frequency} approaches the {\it one-probability} of a PUF cell, and hence, if $a=\Fc^{-1}(\frac{1}{4})$, $b=\Fc^{-1}(\frac{1}{2})$, and $c=\Fc^{-1}(\frac{3}{4})$, then $p_0=p_1=p_2=p_3= \frac{1}{4}$. In practice, the number of trials is not infinity. If the number of trials is small, it is hard to differentiate two close {\it one-probability} values. The larger the number of trials, the closer {\it one-frequency} and {\it one-probability} are. The small mean square errors between the experimental and theoretical error counts for a moderate number of evaluations in \cite{maes2013accurate} demonstrate, if indirectly, that the difference between {\it one-frequency} and {\it one-probability} is small with sufficient number of evaluations. To save time, PUF cells can be read offline to accumulate the `$1$'s for evaluating their {\it one-frequencies}. 

\subsection{Quaternary fuzzy extractor and wiretap channel model}
Our quaternary FE is based on well studied chosen secret model, which is shown in Fig. \ref{key_generation_to_wiretap_channel}. In the enrollment phase, a random quaternary string $\bs$ is chosen to be a secret seed, and then an enrolled key is extracted from $\bs$ using a KDF. A helper data is generated to help the reconstruction of the key in the future. First, $N$ quaternary PUF responses $\bq$ are extracted from $N$ PUF cells, and then they are added (symbol-wise addition in GF(4)) to the quaternary codeword $\bx$ of $\bs$ to get a helper data $\bx+\bq$. The helper data is published, and we assume attackers can access it. In the reconstruction phase, $N$ new quaternary PUF responses $\bq '$ are extracted from the same PUF cells and they are added to the helper data to generate $\bx+\bq + \bq'=\bx+\be$. Then, using $\bx+\be$ the polar decoder generates the decoded secret seed to reconstruct the key.    

 \begin{figure*}[!t]
\centering
\includegraphics[width=6.5in]{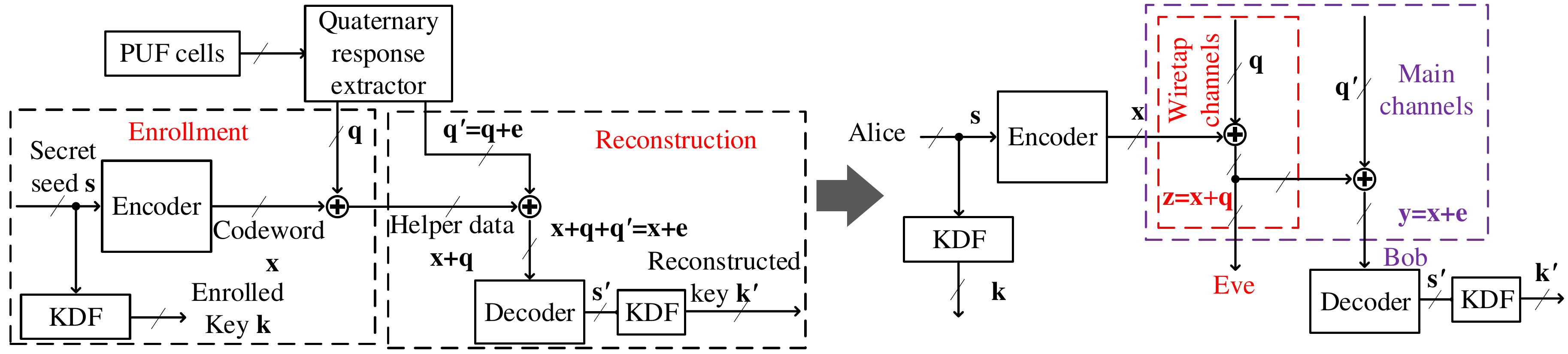}
\caption {A quaternary fuzzy extractor and its wiretap channel model. }
\label{key_generation_to_wiretap_channel}
\end{figure*}


The quaternary FE is modeled as a wiretap channel system as shown in Fig. \ref{key_generation_to_wiretap_channel}. The codeword $\bx$ is the message sent by Alice and $\by=\bx+\be$ is the received message of Bob from $N$ independent main channels, where $\be=\bq + \bq'$ is the noise vector of the $N$ main channels. Ideally $\bq=\bq'$. However,  since PUF bits are not precisely reproducible, $\bq$ and $\bq'$ could be different. A $4 \times 4$ transition probability matrix
\begin{align*}
\small
\bW_m =
\begin{bmatrix}
t_{00} & t_{01} & t_{02} & t_{03}\\
t_{10} & t_{11} & t_{12} & t_{13}\\
t_{20} & t_{21} & t_{22} & t_{23}\\
t_{30} & t_{31} & t_{32} & t_{33}\\
\end{bmatrix}
\end{align*} consists of the transition probabilities of the main channel, where $t_{jk}$ for $j\in \{0,1,2,3\}$ and $k\in \{0,1,2,3\}$ is the transition probability from input $j$ to output $k$. Similarly, the helper data $\bx+\bq$ is treated as the outputs of $N$ independent wiretap channels, where $\bq$ is the noise vector of the $N$ wiretap channels. Hence, the transition probability matrix of the wiretap channel is 
\begin{align*}
\small
\nonumber
\bW_w =
\begin{bmatrix}
p_0 & p_1 & p_2 & p_3\\
p_1 & p_0 & p_3 & p_2\\
p_2 & p_3 & p_0 & p_1\\
p_3 & p_2 & p_1 & p_0\\
\end{bmatrix}.
\end{align*}

To evaluate $\bW_m$, we first use the CDF of the {\it one-probability} to generate the {\it one-probabilities} of multiple cells. Second, we extract two quaternary responses from every PUF cell. Then, we use the numbers of quaternary values to estimate the transition probability. For example, if there are $t$ PUF cells whose original quaternary responses are $1$ and the number of $0$, $1$, $2$, and $3$ among these $t$ cells in the second quaternary responses is $w$, $x$, $y$, and $z$, respectively, then the transition rates $r_{10}=\frac{w}{t}$, $r_{11}=\frac{x}{t}$, $r_{12}=\frac{y}{t}$, and $r_{13}=\frac{z}{t}$, where $r_{ij}$ for $i\in \{0,1,2,3\}$ and $j\in \{0,1,2,3\}$ is the transition rate from input $i$ to output $j$. Our simulation results show that the rate that the quaternary response does not change is very high. We use the transition rates to approximate the transition probabilities in our simulations. For the wiretap channel, when $p_0 = p_1 = p_2 = p_3=\frac{1}{4}$, the wiretap channel is completely noisy. In this case, as long as $\bW_3$ is a $4 \times 4$ matrix whose each element is $\frac{1}{4}$, then $\bW_w = \bW_3 \bW_m$ holds so that the wiretap channel will be stochastically degraded with respect to the main channel.

As long as the wiretap channel is stochastically degraded with respect to the main channel, the non-binary wiretap polar codes of \cite{torfi2017polar} can be applied in our fuzzy extractor. The genie-aided simulation of \cite{cheng2016encoder} is used to construct the polar codes. We get the error rate of each channel through the genie-aided simulation and order the channels with increasing error rates. The first $r$ (length of mask) indices from the list for mask, the next $s$ (length of secret seed) indices for secret seed, and the remaining indices are for frozen symbols. Codewords are generated based on the recursive encoding algorithm in\cite{cheng2016encoder}. In the reconstruction phase, firstly we compute the initial LLRs for the channel outputs and then send them to the list decoder \cite{cheng2016encoder} to generate the estimated secret seed to reconstruct the enrolled key. More details of the encoding and decoding algorithms can be found in \cite{cheng2016encoder}. 


\section{Analysis of secrecy leakage}
\label{QuaterPolar}


The wiretap polar coding in \cite{torfi2017polar} is used in our FE to prevent eavesdroppers from recovering secret seed $\bs$. The $N \times N$ generator matrix $\bG$ of a quaternary polar code with length $N$ is written as $\begin{bmatrix}
\bG_{\mathcal{R}} & \bG_{\mathcal{S}} & \bG_{\mathcal{F}}
\end{bmatrix}^T$, where $\bG_{\mathcal{R}}$,  $\bG_{\mathcal{S}}$, and $\bG_{\mathcal{F}}$ consist of rows of $\bG$ that are indexed by $\mathcal{R}$, $\mathcal{S}$, and  $\mathcal{F}$, respectively. $\begin{bmatrix}
\bG_{\mathcal{R}} & \bG_{\mathcal{S}}
\end{bmatrix}^T$ and $\bG_{\mathcal{R}}$ will be the generator matrix of an ($N, r+s$) linear code $\mathcal{C}_1$ and an ($N,r$) linear code $\mathcal{C}_2$, respectively, where $r$ and $s$ are the size of $\mathcal{R}$ and $\mathcal{S}$, respectively. Hence, $\mathcal{C}_2 \subseteq \mathcal{C}_1$. In the following theorem, we give the upper bound on the secrecy leakage of our key generation scheme.

\begin{theorem} In our key generation scheme, if $p_1 =p_2=p_3=\frac{1-p_0}{3}$ and $p_0 \in \left[\frac{1}{4},1\right]$, the upper bound of secrecy leakage is
{\small
$\log_2 \left[\frac{4^N}{|\mathcal{C}_2|} \sum_{\bv \in \mathcal{C}_2} {p_0}^{N-w(\bv)}\left(\frac{1-p_0}{3}\right)^{w(\bv)}\right]$}, where $|\mathcal{C}_2|$ is the size of $\mathcal{C}_2$ and $w(\bv)$ is the Hamming weight of $\bv$.
\label{thm}
\end{theorem}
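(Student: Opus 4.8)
The plan is to identify the secrecy leakage with the mutual information $I(\bS;\bZ)$ between the secret seed $\bS$ and Eve's observation, namely the public helper data $\bZ=\bX+\bQ$ with $\bX=\bU\bG$ and $\bU=[\,\bR\ \bS\ \bbf\,]$ arranged by the index sets $\mathcal{R},\mathcal{S},\mathcal{F}$. I would bound it by $I(\bS;\bZ)=H(\bZ)-H(\bZ\mid\bS)$, using the trivial bound $H(\bZ)\le\log_2 4^{N}=2N$ since $\bZ\in\mathrm{GF}(4)^{N}$. This supplies the $4^{N}$ inside the logarithm, and (consistent with the paper's claim of zero leakage) it is tight when $p_0=\frac14$; the remaining work is a sharp lower bound on $H(\bZ\mid\bS)$.

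First I would show $H(\bZ\mid\bS)=H(\bV+\bQ)$, where $\bV$ is uniform over $\mathcal{C}_2$ and independent of the noise $\bQ$. Conditioning on $\bS=\bs$ gives $\bX=\bR\bG_{\mathcal{R}}+\bs\bG_{\mathcal{S}}+\bbf\bG_{\mathcal{F}}$; since $\bR$ is uniform and $\bG_{\mathcal{R}}$ generates $\mathcal{C}_2$, the term $\bR\bG_{\mathcal{R}}$ is uniform over $\mathcal{C}_2$, while the remaining terms are a deterministic shift that leaves entropy invariant. Hence $H(\bZ\mid\bS=\bs)$ is the same for every $\bs$ and equals $H(\bV+\bQ)$. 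The key structural observation is then that $\bV+\bQ$ is uniform within each coset of $\mathcal{C}_2$, and its coset equals the syndrome $\sigma(\bQ)$ (because $\sigma(\bV+\bQ)=\sigma(\bQ)$): a short reindexing shows $P(\bV+\bQ=\bz)$ is constant on each coset, so decomposing entropy over the coset label gives $H(\bV+\bQ)=H(\sigma(\bQ))+\log_2|\mathcal{C}_2|=H(\sigma(\bQ))+2r$. I would then lower bound Shannon entropy by min-entropy, $H(\sigma(\bQ))\ge-\log_2\max_{\sigma}P(\sigma(\bQ)=\sigma)$.

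The main obstacle is to prove that the zero coset is the heaviest, i.e. the maximizing syndrome is $\mathbf{0}$, so that $\max_{\sigma}P(\sigma(\bQ)=\sigma)=P(\bQ\in\mathcal{C}_2)=\sum_{\bv\in\mathcal{C}_2}p_0^{\,N-w(\bv)}(\frac{1-p_0}{3})^{w(\bv)}$. Under the hypothesis $p_1=p_2=p_3=\frac{1-p_0}{3}$ the noise is symmetric, $P(\bQ=\bq)=p_0^{N}\theta^{w(\bq)}$ with $\theta=\frac{1-p_0}{3p_0}$, and the constraint $p_0\in[\frac14,1]$ forces $\theta\in[0,1]$. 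I would compute the single-coordinate additive-character transform over $\mathrm{GF}(4)$ of $\phi(x)=\theta^{w(x)}$, obtaining $\hat\phi(0)=1+3\theta$ and $\hat\phi(\psi)=1-\theta\ge0$ for $\psi\neq0$; this nonnegativity is exactly where $p_0\ge\frac14$ enters. The coset weight enumerator then has the MacWilliams-type form $P(\sigma(\bQ)=\sigma(\ba))\propto\sum_{\boldsymbol\psi\in\mathcal{C}_2^{\perp}}\hat\Phi(\boldsymbol\psi)\chi_{\boldsymbol\psi}(\ba)$ with $\hat\Phi(\boldsymbol\psi)=\prod_i\hat\phi(\psi_i)\ge0$; since $|\chi_{\boldsymbol\psi}(\ba)|=1$ and every coefficient is nonnegative, the sum is maximized at $\ba=\mathbf{0}$.

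Combining the pieces yields $I(\bS;\bZ)\le 2N-2r-H(\sigma(\bQ))\le 2N-2r+\log_2 P(\bQ\in\mathcal{C}_2)=\log_2[\frac{4^{N}}{|\mathcal{C}_2|}\sum_{\bv\in\mathcal{C}_2}p_0^{\,N-w(\bv)}(\frac{1-p_0}{3})^{w(\bv)}]$, using $2N-2r=\log_2(4^{N}/|\mathcal{C}_2|)$, which is precisely the stated upper bound. I expect the symmetry assumption $p_1=p_2=p_3$ to be what makes $\phi$ depend only on whether a symbol is zero (hence clean Fourier coefficients and the weight-enumerator form), and $p_0\ge\frac14$ to be what guarantees the Fourier nonnegativity that pins the maximizing coset at $\mathbf{0}$.
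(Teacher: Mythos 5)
Your proof is correct, and at its core it rests on the same two ingredients as the paper's: the trivial bound $H(\bz)\le 2N$ (which the paper applies as $H(\Fc_{\bz-\mathcal{C}_1})\le 2N$ in its Lemma 4) and the fact that, for the symmetric noise with $p_0\ge\tfrac14$, the code $\mathcal{C}_2$ is the heaviest among its own cosets, which yields $H(\bz\mid\bs)\ge-\log_2\Fc_{\mathcal{C}_2}(p_0)$. Your packaging of the second ingredient differs: you decompose $H(\bz\mid\bs)=H(\sigma(\bq))+\log_2|\mathcal{C}_2|$ via the syndrome and then invoke the min-entropy lower bound, whereas the paper rearranges the leakage algebraically into $\sum_{\bz}\Fc_{\bz-\mathcal{C}_2}\log_2\frac{\Fc_{\bz-\mathcal{C}_2}}{\Fc_{\mathcal{C}_2}}+\sum_{\bz}\Fc_{\bz-\mathcal{C}_1}\log_2\frac{\Fc_{\mathcal{C}_2}}{\Fc_{\bz-\mathcal{C}_1}}$ and kills the first sum term by term; these are the same inequality in different clothing. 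The one genuinely different element is your proof of the heaviest-coset fact: you derive it from scratch by a character-sum (Poisson summation / MacWilliams) argument, showing the single-symbol transform $\hat\phi(\psi)=1-\theta\ge0$ exactly when $p_0\ge\tfrac14$, whereas the paper outsources this to Theorem 1.19 of Kl{\o}ve on coset weight enumerators. Your route makes the role of the hypothesis $p_0\in[\tfrac14,1]$ transparent and the argument self-contained; the paper's route is shorter but opaque about why that threshold appears. Both arrive at the identical bound.
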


Theorem \ref{thm} is proved in the appendix. If we add one more mask symbol, the row ($\bg$) of $\bG$ indexed by the position of the added mask symbol is added to the generator matrix of $\mathcal{C}_2$. If $\bg \in \mathcal{C}_2$, then $\mathcal{C}_2$ does not change. Else, the dimension of $\mathcal{C}_2$ increases by one and $\mathcal{C}_2$ becomes $\mathcal{C}_2'=\{\mathcal{C}_2,\bg+\mathcal{C}_2\}$. Since $\bg \notin \mathcal{C}_2$, then $\bg + \mathcal{C}_2$ will be the proper coset of $\mathcal{C}_2$ and \mbox{$\frac{\Fc_{\bg+\mathcal{C}_2}(p_0)}{\Fc_{\mathcal{C}_2}(p_0)} \le 1$} for all $p_0 \in [\frac{1}{4},1]$ based on Theorem 1.19 in \cite{Klove2007codes}, where $\Fc_{\mathcal{C}}(p_0) = \frac{1}{|\mathcal{C}|} \sum_{\bv \in \mathcal{C}} {p_0}^{N-w(\bv)}(\frac{1-p_0}{3})^{w(\bv)}$. Hence, the upper bound of secrecy leakage will be $\log_2 \left[4^N \Fc_{\mathcal{C}_2'}(p_0)\right] = \log_2 \left[4^N \frac{\Fc_{\mathcal{C}_2}(p_0)+\Fc_{\bg+\mathcal{C}_2}(p_0)}{2}\right] \le \log_2 \left[4^N \Fc_{\mathcal{C}_2}(p_0)\right]$, which means that the upper bound on the secrecy leakage can be reduced by increasing the mask length. If $p_1 =p_2=p_3=\frac{1-p_0}{3}$, the secrecy leakage is reduced to a negligible value by increasing the mask length. In addition, \mbox{Theorem \ref{thm}} suggests that when $p_0 = p_1= p_2= p_3 = \frac{1}{4}$, the secrecy leakage $\le \log_2 \left[\frac{4^N}{|\mathcal{C}_2|} \sum_{\bv \in \mathcal{C}_2} {\frac{1}{4}}^{N-w(\bv)}\frac{1}{4}^{w(\bv)}\right] = 0$.




\newcommand{\tabincell}[2]{\begin{tabular}{@{}#1@{}}#2\end{tabular}}
\begin{table*}[h]
\footnotesize
\renewcommand{\arraystretch}{1.3}
\caption{Comparisons of the key generation methods. $p_0$, $p_1$, $p_2$, and $p_3$ are the probabilities of the quaternary PUF response being $0$, $1$, $2$, and $3$, respectively. For a fair comparison, all schemes are designed to generate a $128$-bit key with failure probability $\le 10^{-6}$.}
\centering
\begin{tabular}{  |p{1.6cm}< {\centering} | p{3.4cm}< {\centering} |  p{1.5cm}<{\centering} | p{0.8cm}< {\centering} | p{2.2cm}<{\centering}| p{1.5cm}< {\centering}| p{0.8cm}< {\centering} | p{2.5cm}<{\centering}|}
\hline
{}& {Quaternary channel parameters}&{PUF entropy}  & {Mask} & {Secrecy leakage }& {Code} &  {Block}& {Failure probability} \\ \hline

{\tabincell{c}{State-of-the-art \\ key generations}} &  {-}   & {{\tabincell{c}{ $1$ \cite{delvaux2014helper} \cite{maes2015secure} \cite{hiller2017hiding}, \\ $ \in [1,2]$  \cite{suzuki2018quaternary}  }}} & {-} &  {{ \tabincell{c}{ $0$  \cite{suzuki2018quaternary} \cite{maes2015secure} \cite{ueno2019tackling}, \\ negligible \cite{hiller2017hiding}\cite{bai2019secure} }}  \centering} & {\tabincell{c}{ Golay \cite{maes2015secure} \\ RM \cite{hiller2017hiding} \cite{ueno2019tackling}} } &  {-} &  {{$\le 10^{-6}$  \cite{suzuki2018quaternary} \cite{ueno2019tackling} \cite{hiller2017hiding} }}\\ \hline

{\multirow{8}{*} { This work}} & {$p_0$=$p_1$=$p_2$=$p_3$=$0.25$}  & {$2$} & {$0$} & {$0$ }& {\multirow{8}{*} {\tabincell{c}{($256,64$) qua-\\ternary polar} }} &  {$1$} & {\multirow{8}{*} {$1.45 \times 10^{-7}$}} \\ \cline{2-5} \cline{7-7}

{\multirow{8}{*} {} } & {$p_0$=$0.265$, $p_1$=$p_2$=$p_3$=$0.245$}  & {$1.9991$} & {$15$} & {$\le 0.0462$} &{\multirow{8}{*} {} }&  {$2$} & {\multirow{8}{*} {} } \\ \cline{2-5} \cline{7-7}

{\multirow{8}{*} {} } & {$p_0$=$0.268$, $p_1$=$p_2$=$p_3$=$0.244$}  & {$1.9988$} & {$15$} & {$\le 0.1092$} &{\multirow{8}{*} {} }&  {$2$} & {\multirow{8}{*} {} } \\ \cline{2-5} \cline{7-7}

{\multirow{8}{*} {} } & {$p_0$=$0.271$, $p_1$=$p_2$=$p_3$=$0.243$}  & {$1.9983$} & {$15$} & {$\le 0.2423$} &{\multirow{8}{*} {} }&  {$2$} & {\multirow{8}{*} {} } \\ \cline{2-5} \cline{7-7}

{\multirow{8}{*} {} } & {$p_0$=$0.274$, $p_1$=$p_2$=$p_3$=$0.242$}  & {$1.9978$} & {$15$} & {$\le 0.5232$} &{\multirow{8}{*} {} }&  {$2$} & {\multirow{8}{*} {} } \\ \cline{2-5} \cline{7-7}

{\multirow{8}{*} {} } & {$p_0$=$0.277$, $p_1$=$p_2$=$p_3$=$0.241$}  & {$1.9973$} & {$15$} & {$\le 1.1032$} &{\multirow{8}{*} {} }&  {$2$} & {\multirow{8}{*} {} }  \\ \cline{2-5} \cline{7-7}

{\multirow{8}{*} {} } & {\multirow{2}{*} {$p_0$=$0.28$, $p_1$=$p_2$=$p_3$=$0.24$}} &  {\multirow{2}{*} {$1.9966$}}  & {$15$} & {$\le 2.1658$} &{\multirow{8}{*} {} }&  {$2$} & {\multirow{8}{*} {} } \\ \cline{4-5} \cline{7-7}

{\multirow{8}{*} {} } &{\multirow{2}{*} {} }  &  {\multirow{2}{*} {}} & {$16$} & {$\le 0.9827$} &{\multirow{8}{*} {} }&  {$2$} & {\multirow{8}{*} {} } \\ \hline
\end{tabular}
\label{compare}
\end{table*}

\section{Simulations and comparisons}
\label{SimCompar}


We compare our key generation scheme with other state-of-the-art schemes in Table \ref{compare}.

In our scheme, when $p_0 = p_1= p_2= p_3 = \frac{1}{4}$, the entropy of the quaternary PUF response will be two, which is twice that of other key generation schemes in \cite{delvaux2014helper}\cite{maes2015secure}\cite{hiller2017hiding}. Hence,  the required number of PUF cells in our scheme is as low as half that of the schemes in \cite{delvaux2014helper}\cite{maes2015secure}\cite{hiller2017hiding}. If $p_0$, $p_1$, $p_2$, and $p_3$ are not exactly $\frac{1}{4}$ (some examples are given in Table \ref{compare}), the entropy will be less than $2$.  

Some key generation schemes with debiasing, such as Von Neumann correctors \cite{maes2015secure} and biased-masking \cite{ueno2019tackling}, have zero leakage, and some key generation schemes \cite{bai2019secure}\cite{hiller2017hiding} have negligible secrecy leakage. When $p_0 = p_1= p_2= p_3 = \frac{1}{4}$, our scheme has zero leakage without mask. When the quaternary PUF responses are biased ($p_0$, $p_1$, $p_2$, and $p_3$ are not exactly $\frac{1}{4}$), we generate all codewords of $\mathcal{C}_2$ by brute force and computed the upper bound in Theorem \ref{thm}. The simulation results are shown in Table \ref{compare}. With the same mask length, the upper bound of information leakage increases with bias. The simulation results in Table \ref{compare} show that the upper bound of secrecy leakage can be reduced by increasing the mask length as discussed in Section \ref{QuaterPolar}. In theory we could make the leakage smaller by increasing the mask length.

In our simulation, we aim to generate $128$-bit key with failure probability $\le 10^{-6}$. We consider SRAM-PUFs with $\lambda_1 = 0.1213$ and $\lambda_2 = 0.0210$ \cite{maes2013accurate}, whose average bit error rate (ABER) is $3.85\%$. Under such an ABER, the PUF-based key generation schemes in \cite{suzuki2018quaternary} \cite{ueno2019tackling} \cite{hiller2017hiding} can reconstruct a secret key with failure probability $\le 10^{-6}$. A (256, 64) quaternary polar code is used in our simulation. The list decoding \cite{cheng2016encoder} with four lists is used in our simulation. When $p_0 = p_1= p_2= p_3 = \frac{1}{4}$, only one block is enough to generate $128$-bit key. When $p_0=0.28$ and $p_1=p_2=p_3=0.24$, $16$ symbols are used as mask and each block can generate $94$-bit key. Hence, two blocks are needed in this case. The failure probability of our key reconstruction is $1.45 \times 10^{-7}$.

Many error correction codes have been considered in PUF based key generations, such as Golay codes \cite{maes2015secure}, Reed-Muller (RM) codes \cite{hiller2017hiding} \cite{ueno2019tackling}, and binary polar codes \cite{bai2019secure}. The complexity of these codes is well studied. Hence, by comparing the complexity of binary and quaternary polar decoders, we indirectly compare the complexity of our scheme with that of other state-of-the-art schemes. In the following, we provide a $\mathbf{rough}$ comparison of the computational complexities for binary and quaternary polar decoders. In a successive cancellation list (SCL) decoder, $L$ decoding paths are considered concurrently at each decoding stage and the most likely one is selected. For binary and quaternary SCL decoders, expect for LLR-updates, the other computations are similar. Hence, we focus on the LLR-updates in the following comparison. 


In a binary polar code decoder, the following two functions are used to update the LLRs of a pair of bits \cite{leroux2011hardware}
\begin{align*}
&\f\left(\lambda_0, \lambda_1\right)=\sgn ( \lambda_0)\sgn(\lambda_1)\min\left(\mid \lambda_0 \mid, \mid \lambda_1 \mid \right), \\  
&\g\left(\lambda_0,\lambda_1,\gamma \right)=\left(1-2\gamma \right) \lambda_0+\lambda_1,
\end{align*} where $\lambda_0$ and $\lambda_1$ are two inputs LLRs of the two bits, $\sgn(\lambda_0)$ returns the sign of $\lambda_0$, $\min\left \{\mid \lambda_0 \mid, \mid \lambda_1 \mid \right \}$ returns the minimum value of $\mid\lambda_0\mid$ and $\mid\lambda_1\mid$, and $\gamma$ is the partial sum of previously decoded bits. When the code length is $N$ bits, $\f$ and $g$ functions are called $\frac{N}{2}\log_2 N$ times in a decoding process, respectively. As shown in Table \ref{decoding_complexity}, There are five basic operations in $\f$: two absolute value operations, a minimum operation, and two sign operations. In the $\g$ function, if $\gamma = 0$, its output will be $\lambda_0+\lambda_1$, otherwise its output will be $\lambda_1-\lambda_0$. Hence, a $\g$ function needs one multiplexer, one addition, and one subtraction.


\begin{table*}[h]
\footnotesize
\renewcommand{\arraystretch}{1.3}
\caption{Computational complexity of binary and quaternary polar decoders. The code lengths of the binary and quaternary polar codes are $N$ bits ($N =2 \times 4^n$, $n=1,2,3, \cdots$).}
\label{decoding_complexity}
\centering
\begin{tabular}{  cccccccccccc}
\hline
{} & {function}& {Number}& {Mux}& {Min}& {Max-64}&{Max-16}& {Max-4}& {Add} & {Sub}& {Sign}&{Absolute} \\ \hline 
 {\multirow{2}{*} { Binary}  \centering} & {$\f$}&  {$\frac{N}{2}\log_2 N$}&  {-} &  {1}&  {-} &  {-} &  {-} &  {-} &  {-} &  {$2$} &  {$2$} \\ 
 
  {\multirow{4}{*} {}  \centering} & {$\g$}&  {$\frac{N}{2}\log_2 N$}&  {1} &  {-} &  {-} &  {-} &  {-} &  {$1$} &  {$1$} &  {-} &  {-} \\  \hline 

 {\multirow{4}{*} { Quaternary}  \centering} & {$\f_0$}&  {$\frac{N}{16}\log_2 \frac{N}{2}$}&  {-}&  {-} &  {$4$} &  {-} &  {-} &  {$768$} & {3} &  {-} &  {-} \\ 
 
  {\multirow{4}{*} {}  \centering} & {$\f_1$}&  {$\frac{N}{16}\log_2 \frac{N}{2}$}&  {-}&  {-} &  {-} &  {$4$} &  {-} &  {$192$}  & {3} &  {-} &  {-} \\ 
  {\multirow{4}{*} {}  \centering} & {$\f_2$}&  {$\frac{N}{16}\log_2 \frac{N}{2}$}&  {-}&  {-} &  {-} &  {-} &  {$4$} &  {$48$}  & {3} &  {-} &  {-} \\ 
  
  {\multirow{4}{*} {}  \centering} & {$\f_3$}&  {$\frac{N}{16}\log_2 \frac{N}{2}$}&  {-}&  {-} &  {-} &  {-} &  {-} &  {$12$}  & {3} &  {-} &  {-} \\ \hline

\end{tabular}
\label{compare}
\end{table*}

In a quaternary polar codes with Reed-Solomon kernel $G_{RS4}$ \cite{cheng2016encoder}, four symbols $\bu^3_0=\begin{bmatrix}u_0&u_1&u_2&u_3\end{bmatrix}$ are decoded together. There are four functions $\f_0$, $\f_1$, $\f_2$, and $\f_3$ to update the LLR of the first, second, third, and fourth symbols, respectively. When the code length is $N$ bits (to compare with the binary decoder we assume $N =2 \times 4^n$, $n=1,2,3, \cdots$), $\f_0$, $\f_1$, $\f_2$, and $\f_3$ are called $\frac{N}{16}\log_2 \frac{N}{2} $ times in a decoding process, respectively. In $\f_0$, the updated LLRs of four values \cite{cheng2016encoder}:
\begin{align*}
&\hat{\lambda}^{(0)}_0 = 0, \\  
&\hat{\lambda}^{(1)}_0 \approx \max\left(\R\left(\left[0,\bu^3_1\right]\right)\right)-\max\left(\R\left(\left[1,\bu^3_1\right]\right)\right),\\
&\hat{\lambda}^{(2)}_0 \approx \max\left(\R\left(\left[0,\bu^3_1\right]\right)\right)-\max\left(\R\left(\left[2,\bu^3_1\right]\right)\right),\\
&\hat{\lambda}^{(3)}_0 \approx \max\left(\R\left(\left[0,\bu^3_1\right]\right)\right)-\max\left(\R\left(\left[3,\bu^3_1\right]\right)\right),
\end{align*} where $\left[i,\bu^3_1\right]$ for $i=0,1,2,3$ is the symbol vector $\bu^3_0$ with $u_0=i$, $\R(\bu^3_0) = -\sum_{r=0}^{3} \lambda^{(x_r)}_r$, $\lambda^{(x_r)}_r$ is the input LLR of the $r$-th symbol with $x_r$ for $i=0,1,2,3$ ($\bx^3_0=\begin{bmatrix}x_0&x_1&x_2&x_3\end{bmatrix}=\bu^3_0G_{RS4}$), and $\max\left(\R\left(\left[i,\bu^3_1\right]\right)\right)$ for $i=0,1,2,3$ returns the maximum value of all $\R\left(\left[i,\bu^3_1\right]\right)$ results. There are $4^3=64$ combinations of $\bu_1^3$, and hence $\R$ function is called $256$ times in $\f_0$. Since an $\R$ function needs three additions, the number of addition operations is $768$. In addition, a maximum operation is needed to find the maximum number from the $64$ results of $\R$ (Max-64 in Table \ref{decoding_complexity}). Finally, the four LLRs of $\f_0$ is generated after three subtraction operations. Similarly, $\f_1$, $\f_2$, and $\f_3$ need $192$, $48$, and $12$ addition operations, respectively. $\f_1$ needs a maximum operation to find the maximum number from $16$ results of $\R$ (Max-16 in Table \ref{decoding_complexity}) and $\f_2$ needs a maximum operation to find the maximum number from $4$ results of $\R$ (Max-4 in Table \ref{decoding_complexity}). 

Table \ref{decoding_complexity} shows that the binary polar decoder uses only a few simple operations. However, the quaternary polar decoder uses a lot of addition and maximum operations.

\section{Conclusion}
\label{Con}

A secure, robust, and efficient fuzzy extractor based on quaternary PUF responses and wiretap polar coding is proposed in this paper. To analyze the secrecy leakage, we build the wiretap channel model for the fuzzy extractor. The wiretap polar coding is adopted in the fuzzy extractor to hide secrecy leakage and ensure the robustness of the key generation. The upper bound of the secrecy leakage is proposed in this paper, and we show that the leakage can be zero by properly setting parameters of the quaternary PUF responses generation.      

\section*{Appendix: Proof of Theorem \ref{thm}}

The secrecy leakage (SL) is the mutual information between the secret seed $\bs \in \{0,1,2,3\}^s$ and the outputs of wiretap channels $\bz \in \{0,1,2,3\}^N$
\begin{align}
\nonumber
& \mbox{SL} = I(\bs;\bz) = H(\bz) - H(\bz|\bs)\\
&=\sum_{\bz} \Pr\{\bz|\bs\} \log_2 \Pr \{\bz|\bs\} - \sum_{\bz} \Pr\{\bz\} \log_2 \Pr\{\bz\},
\label{IL}
\end{align} where $I$ is a mutual information function and $H$ is a binary entropy function. To compute $H(\bz|\bs)$, first we have 
\allowdisplaybreaks
\begin{align*}
& \Pr\{\bz|\bs_i\} =\sum_{\bx \in \bx_i + \mathcal{C}_2} \Pr\{\bx|\bs_i\} \Pr\{\bz|\bx,\bs_i\} \\
&  = \frac{1}{4^{r}} \sum_{\bx \in \bx_i + \mathcal{C}_2} \Pr\{\bz|\bx\} \\
& = \frac{1}{4^{r}} \sum_{\bx \in \bx_i + \mathcal{C}_2} p_0^{N-d(\bx,\bz)}(\frac{1-p_0}{3})^{d(\bx,\bz)}\\ 
& = \frac{1}{4^{r}} \sum_{\bv \in \bz - \bx_i - \mathcal{C}_2} p_0^{N-w(\bv)}(\frac{1-p_0}{3})^{w(\bv)}, 
\end{align*} where $\bs_i$ is the $i$-th secret message, $\bx_i= \bs_i \bG_{\mathcal{S}}$, $d$ is a Hamming distance function, and $w(\bv)$ is the Hamming weight of $\bv$. Set $\{\bz - \bx_i - \mathcal{C}_2,\bz \in \{0,1,2,3\}^N\}$ and $\{\bz - \mathcal{C}_2,\bz \in \{0,1,2,3\}^N\}$ are the same for a given $\bs_i$, and so are $\{\Pr\{\bz|\bs_i\},\bz \in \{0,1,2,3\}^N\}$ and $\{\Pr\{\bz|\bs = 0\},\bz \in \{0,1,2,3\}^N\}$. Hence, $H(\bz|\bs) = -\sum_{\bz} \Pr\{\bz|\bs=0\} \log_2 \Pr \{\bz|\bs=0\}$, where $ \Pr \{\bz|\bs=0\} = \frac{1}{4^{r}} \sum_{\bv \in \bz - \mathcal{C}_2} p_0^{N-w(\bv)}(\frac{1-p_0}{3})^{w(\bv)}$. To compute $H(\bz)$, we have 
\allowdisplaybreaks
\begin{align*}
& \Pr\{\bz\} =\sum_{i = 1}^{4^s} \Pr\{\bs_i\} \Pr\{\bz|\bs_i\} \\
& = \sum_{i = 1}^{4^s} \Pr\{\bs_i\} \sum_{\bx\in \bx_i + \mathcal{C}_2} \Pr\{\bx|\bs_i\} \Pr\{\bz|\bx,\bs_i\} \\
\nonumber
& = \sum_{i = 1}^{4^s} \frac{1}{4^{s}} \sum_{\bx\in \bx_i + \mathcal{C}_2} \frac{1}{4^{r}} \Pr\{\bz|\bx\} \\
& = \frac{1}{4^{r+s}} \sum_{\bx\in \mathcal{C}_1} p_0^{N-d(\bx,\bz)}(\frac{1-p_0}{3})^{d(\bx,\bz)} \\
& = \frac{1}{4^{r+s}} \sum_{\bv \in \bz - \mathcal{C}_1} p_0^{N-w(\bv)}(\frac{1-p_0}{3})^{w(\bv)}. 
\end{align*} 

Define a new function 
\begin{align}
\Fc_{\mathcal{C}}(p_0) = \frac{1}{|\mathcal{C}|} \sum_{\bv \in \mathcal{C}} {p_0}^{N-w(\bv)}(\frac{1-p_0}{3})^{w(\bv)}.
\label{pc}
\end{align} Hence,
\allowdisplaybreaks
\begin{align}
& \Pr\{\bz|\bs=0\} =  \Fc_{\bz-\mathcal{C}_2}(p_0), \label{pzs0}\\ 
& \Pr\{\bz\} = \Fc_{\bz-\mathcal{C}_1}(p_0).
\label{pz}
\end{align}

\begin{lemma} 
For an ($N$, $k$) code $\mathcal{C}$ and all $\bz \in \{0,1,2,3\}^N$,
$\sum_{\bz} \Fc_{\bz-\mathcal{C}}(p_0) = 1$.
\label{equal1}
\end{lemma}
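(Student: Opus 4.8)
The plan is to unfold the definition \eqref{pc} of $\Fc_{\bz-\mathcal{C}}(p_0)$, interchange the two summations, and reduce the combinatorial core to a coordinate-wise factorization. First I would use that $\bz-\mathcal{C}=\{\bz-\bv:\bv\in\mathcal{C}\}$ is a coset of $\mathcal{C}$, hence of size $|\mathcal{C}|$, to write
\[
\sum_{\bz \in \{0,1,2,3\}^N} \Fc_{\bz-\mathcal{C}}(p_0)
= \frac{1}{|\mathcal{C}|}\sum_{\bz \in \{0,1,2,3\}^N}\ \sum_{\bv \in \mathcal{C}} {p_0}^{N-w(\bz-\bv)}\left(\frac{1-p_0}{3}\right)^{w(\bz-\bv)}.
\]

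The key step is a change of variable. For each fixed $\bv\in\mathcal{C}$, the translation $\bz \mapsto \bz-\bv$ is a bijection of $\{0,1,2,3\}^N$, so as $\bz$ ranges over the whole space the vector $\bu=\bz-\bv$ does too. Swapping the order of summation and substituting $\bu$ therefore replaces the inner sum over $\bz$ by a full sum over $\bu$ for each of the $|\mathcal{C}|$ choices of $\bv$, so each $\bu$ is counted exactly $|\mathcal{C}|$ times; the prefactor $1/|\mathcal{C}|$ cancels and
\[
\sum_{\bz \in \{0,1,2,3\}^N} \Fc_{\bz-\mathcal{C}}(p_0)
= \sum_{\bu \in \{0,1,2,3\}^N} {p_0}^{N-w(\bu)}\left(\frac{1-p_0}{3}\right)^{w(\bu)}.
\]

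Finally I would factor this sum over the $N$ coordinates, writing the summand as a product of per-coordinate weights: a coordinate equal to $0$ contributes $p_0$, and each of the three nonzero field elements contributes $\frac{1-p_0}{3}$. Hence every coordinate sums to $p_0 + 3\cdot\frac{1-p_0}{3} = 1$, and the total equals $1^N = 1$, which is the claim. The only point that needs care — the \emph{main obstacle}, though a mild one — is justifying that the cosets collectively cover $\{0,1,2,3\}^N$ with uniform multiplicity $|\mathcal{C}|$, which is precisely the injectivity (hence bijectivity) of translation by a fixed vector. As a consistency check, note that $\Fc_{\bz-\mathcal{C}_2}(p_0)=\Pr\{\bz\mid\bs=0\}$ from \eqref{pzs0} is a conditional probability mass function over $\bz$, so its total mass must be $1$, in agreement with the lemma.
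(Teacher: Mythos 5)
Your proof is correct and follows essentially the same route as the paper: both arguments reduce the double sum to $\sum_{\bu \in \{0,1,2,3\}^N} p_0^{N-w(\bu)}\left(\frac{1-p_0}{3}\right)^{w(\bu)}$ by observing that the cosets cover the space with multiplicity $|\mathcal{C}|$, and then evaluate it as $(p_0 + 3\cdot\frac{1-p_0}{3})^N = 1$ (the paper via the multinomial theorem, you via an equivalent coordinate-wise factorization). Your explicit translation-bijection argument is in fact a cleaner justification of the multiplicity count that the paper leaves implicit.
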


\begin{proof}Let $q = \frac{1-p_0}{3}$ and $m = |\bz-\mathcal{C}|$, and $\sum_{\bz} \Fc_{\bz-\mathcal{C}}(p_0) = \frac{1}{m}[m{p_0}^{N} {q}^{0} {q}^{0} {q}^{0} + mN{p_0}^{N-1} {q}^{1} {q}^{0} {q}^{0} + m{N \choose 2}{p_0}^{N-2} {q}^{2} {q}^{0} {q}^{0} + \cdots + m{N \choose h}{N-h \choose i}{N-h-i \choose j} {N-h-i-j \choose k}{p_0}^{h} {q}^{i} {q}^{j} {q}^{k} + \cdots +m{p_0}^{0} {q}^{0} {q}^{0} {q}^{N}] = \sum \frac{N!}{h!i!j!k!} {p_0}^{h} {q}^{i} {q}^{j} {q}^{k} = (p_0+3q)^N  = 1$.
\end{proof}

Based on (\ref{pzs0}), (\ref{pz}), and Lemma \ref{equal1},  
{\footnotesize
\begin{align}
&\mbox{SL} = \sum_{\bz} \Fc_{\bz-\mathcal{C}_2}(p_0) \log_2 \Fc_{\bz-\mathcal{C}_2}(p_0) - \sum_{\bz} \Fc_{\bz-\mathcal{C}_1}(p_0) \log_2 \Fc_{\bz-\mathcal{C}_1}(p_0) \nonumber \\
&= \sum_{\bz} \Fc_{\bz-\mathcal{C}_2}(p_0) \log_2 \frac{\Fc_{\bz-\mathcal{C}_2}(p_0)}{\Fc_{\mathcal{C}_2}(p_0)}+ \sum_{\bz} \Fc_{\bz-\mathcal{C}_1}(p_0) \log_2 \frac{\Fc_{\mathcal{C}_2}(p_0)}{\Fc_{\bz-\mathcal{C}_1}(p_0)}.
\label{SL2}
\end{align}}

\begin{lemma} For code $\mathcal{C}_2$ and all $\bz \in \{0,1,2,3\}^N$, if $ \frac{1}{4} \le p_0 \le 1$, 
$\sum_{\bz} \Fc_{\bz-\mathcal{C}_2}(p_0) \log_2 \frac{\Fc_{\bz-\mathcal{C}_2}(p_0)}{\Fc_{\mathcal{C}_2}(p_0)} \le 0$.
\label{l3}
\end{lemma}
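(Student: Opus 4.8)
The plan is to exploit that the summand depends on $\bz$ only through its coset of $\mathcal{C}_2$, reducing the sum over all $4^N$ vectors to a sum over the $4^N/|\mathcal{C}_2|$ cosets, and then bound each coset term separately. First I would observe that over $\mathrm{GF}(4)$ (characteristic two) we have $-\mathcal{C}_2 = \mathcal{C}_2$, so $\bz-\mathcal{C}_2$ is exactly the coset $\bz+\mathcal{C}_2$ containing $\bz$; consequently $\Fc_{\bz-\mathcal{C}_2}(p_0)$ is constant on each coset, and as $\bz$ runs over $\{0,1,2,3\}^N$ each coset $D$ is produced exactly $|\mathcal{C}_2|$ times. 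Writing $\Fc_D(p_0)$ for the common value on $D$, the left-hand side becomes $|\mathcal{C}_2|\sum_D \Fc_D(p_0)\log_2\frac{\Fc_D(p_0)}{\Fc_{\mathcal{C}_2}(p_0)}$, the sum ranging over all cosets $D$ of $\mathcal{C}_2$, where $\Fc_{\mathcal{C}_2}(p_0)$ is itself the value on the trivial coset $D=\mathcal{C}_2$.

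The key step is to show that the trivial coset maximizes $\Fc_D(p_0)$ for $p_0\in[\frac14,1]$, that is, $\Fc_D(p_0)\le \Fc_{\mathcal{C}_2}(p_0)$ for every coset $D$. Writing $q=\frac{1-p_0}{3}$ and $t=q/p_0$, the hypothesis $p_0\ge\frac14$ is exactly $q\le p_0$, i.e. $t\in[0,1]$, and $\Fc_D(p_0)=\frac{p_0^N}{|\mathcal{C}_2|}\sum_{\bu\in D} t^{w(\bu)}$; thus the claim is equivalent to $\sum_{\bu\in D} t^{w(\bu)}\le \sum_{\bv\in\mathcal{C}_2} t^{w(\bv)}$ for all $t\in[0,1]$, which is Theorem 1.19 in \cite{Klove2007codes} (the same domination already invoked in Section \ref{QuaterPolar} for a single coset). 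Granting this, each ratio $\Fc_D(p_0)/\Fc_{\mathcal{C}_2}(p_0)$ lies in $[0,1]$, so its logarithm is nonpositive; since $\Fc_D(p_0)\ge 0$, every term $\Fc_D(p_0)\log_2\frac{\Fc_D(p_0)}{\Fc_{\mathcal{C}_2}(p_0)}$ is nonpositive (with the usual convention $0\log_2 0=0$, relevant only at $p_0=1$), and summing over $D$ and multiplying by $|\mathcal{C}_2|>0$ gives the claimed inequality.

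The only nontrivial ingredient is the coset-domination inequality $\sum_{\bu\in D} t^{w(\bu)}\le \sum_{\bv\in\mathcal{C}_2} t^{w(\bv)}$, so that is where I expect the real work to sit; everything else is bookkeeping. Since the paper already relies on this fact (through Theorem 1.19 of \cite{Klove2007codes}), I would simply cite it rather than reprove it; but if a self-contained argument were wanted I would try an induction on $N$ after splitting each coset according to the value in a fixed coordinate, or equivalently note that it expresses the elementary fact that among all cosets the code itself has the lightest weight distribution, so the low-weight terms (which dominate when $t\le 1$) are most heavily represented by $\mathcal{C}_2$. A minor point to dispatch is the boundary case $p_0=1$ (where $t=0$ and every nontrivial coset contributes $0$) and the degenerate equality case $p_0=\frac14$ (where $t=1$ and every term vanishes), both of which are consistent with the bound.
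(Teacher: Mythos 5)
Your proposal is correct and follows essentially the same route as the paper: both arguments reduce the claim to the coset-domination fact that $\Fc_{\bz-\mathcal{C}_2}(p_0)\le \Fc_{\mathcal{C}_2}(p_0)$ for $p_0\in[\frac{1}{4},1]$ (equivalently, that the code's weight enumerator at $t=\frac{1-p_0}{3p_0}\in[0,1]$ dominates that of every coset, via Theorem 1.19 of the cited reference), and then conclude that each summand is nonpositive. Your explicit grouping of $\bz$ into cosets and your treatment of the boundary cases $p_0=\frac{1}{4}$ and $p_0=1$ are just additional bookkeeping on top of the paper's termwise bound.
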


\begin{proof}
Recall the weight distribution function of a code $\mathcal{C}$ \cite{Klove2007codes} 
$A_{\mathcal{C}}(w) = \sum_{i=0}^N  n_i w^i$, where $n_i$ is number of codewords with Hamming weight $i$ ($i=0,1,\cdots,N$), then the new function in (\ref{pc}) will be $\Fc_{\mathcal{C}}(p_0) =  \frac{1}{|\mathcal{C}|} [{p_0}^N + n_1{p_0}^{N-1}(\frac{1-p_0}{3})^1+ \cdots +n_N(\frac{1-p_0}{3})^N] = \frac{1}{|\mathcal{C}|} {p_0}^N [1 + n_1(\frac{1-p_0}{3p_0})^1+ \cdots +n_N(\frac{1-p_0}{3p_0})^N] =  \frac{1}{|\mathcal{C}|} {p_0}^N A_{\mathcal{C}}(\frac{1-p_0}{3p_0})$. If $\bz \notin \mathcal{C}_2$, $\bz - \mathcal{C}_2$ will be the proper coset of $\mathcal{C}_2$, and then $\frac{\Fc_{\bz-\mathcal{C}_2}(p_0)}{\Fc_{\mathcal{C}_2}(p_0)} = \frac{A_{\bz-\mathcal{C}_2}(\frac{1-p_0}{3p_0})}{A_{\mathcal{C}_2}(\frac{1-p_0}{3p_0})} \le \frac{1-(\frac{4p_0-1}{3})^{r+1}}{1+3(\frac{4p_0-1}{3})^{r+1}} \le 1$ for all $p_0 \in [\frac{1}{4},1]$ according to the Theorem 1.19 in \cite{Klove2007codes}.
\end{proof}


\begin{lemma}
For codes $\mathcal{C}_1$, $\mathcal{C}_2$, and all $\bz \in \{0,1,2,3\}^N$,
$\sum_{\bz} \Fc_{\bz-\mathcal{C}_1}(p_0) \log_2 \frac{\Fc_{\mathcal{C}_2}(p_0)}{\Fc_{\bz-\mathcal{C}_1}(p_0)} \le  \log_2 \left[4^N \Fc_{\mathcal{C}_2}(p_0)\right].$
\label{l4}
\end{lemma}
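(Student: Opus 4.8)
The plan is to recognize the left-hand side as a constant plus a Shannon entropy, and then to invoke the elementary maximum-entropy bound. First I would observe that, by Lemma~\ref{equal1} applied to $\mathcal{C}_1$, the quantity $\Pr\{\bz\} = \Fc_{\bz-\mathcal{C}_1}(p_0)$ satisfies $\sum_{\bz} \Fc_{\bz-\mathcal{C}_1}(p_0) = 1$ and is nonnegative, so it is a genuine probability distribution over the $4^N$ vectors $\bz \in \{0,1,2,3\}^N$. This is exactly the distribution identified in (\ref{pz}), so no new computation is needed to set it up.

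Next, since $\Fc_{\mathcal{C}_2}(p_0)$ does not depend on $\bz$, I would split the logarithm as $\log_2 \frac{\Fc_{\mathcal{C}_2}(p_0)}{\Fc_{\bz-\mathcal{C}_1}(p_0)} = \log_2 \Fc_{\mathcal{C}_2}(p_0) - \log_2 \Fc_{\bz-\mathcal{C}_1}(p_0)$ and distribute the sum over $\bz$. The first piece collapses to $\log_2 \Fc_{\mathcal{C}_2}(p_0)$ because the weights $\Fc_{\bz-\mathcal{C}_1}(p_0)$ sum to one, while the second piece is precisely the Shannon entropy $H(\bz) = -\sum_{\bz} \Fc_{\bz-\mathcal{C}_1}(p_0)\log_2 \Fc_{\bz-\mathcal{C}_1}(p_0)$ of that distribution. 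Hence the left-hand side equals $\log_2 \Fc_{\mathcal{C}_2}(p_0) + H(\bz)$.

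Finally, I would bound $H(\bz)$ from above. Because $\bz$ ranges over a set of exactly $4^N$ elements, the standard maximum-entropy inequality (equivalently, Gibbs' or Jensen's inequality comparing against the uniform distribution) gives $H(\bz) \le \log_2 4^N$, with equality when $\Pr\{\bz\}$ is uniform. Substituting yields $\log_2 \Fc_{\mathcal{C}_2}(p_0) + \log_2 4^N = \log_2\!\left[4^N \Fc_{\mathcal{C}_2}(p_0)\right]$, which is exactly the claimed bound.

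I do not expect a serious obstacle, since the whole argument rests on two facts already in hand: that $\Fc_{\bz-\mathcal{C}_1}(p_0)$ is a probability mass function (Lemma~\ref{equal1}), and the inequality $H \le \log_2(\text{support size})$. The only point requiring mild care is the boundary case $p_0 = 1$, where $q = \frac{1-p_0}{3} = 0$ and the distribution degenerates so that several terms vanish; adopting the usual convention $0\log_2 0 = 0$ keeps the entropy expression and its bound valid throughout $p_0 \in [\tfrac{1}{4},1]$.
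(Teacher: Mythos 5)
Your proposal is correct and follows essentially the same route as the paper's proof: split the logarithm, use Lemma~\ref{equal1} to collapse the constant term, identify the remaining sum as the Shannon entropy of the distribution $\Fc_{\bz-\mathcal{C}_1}(p_0)$, and bound that entropy by $\log_2 4^N = 2N$. Your added remark about the degenerate case $p_0=1$ and the convention $0\log_2 0 = 0$ is a small but sensible refinement the paper leaves implicit.
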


\begin{proof} 
\begin{align*}
& \sum_{\bz} \Fc_{\bz-\mathcal{C}_1}(p_0) \log_2 \frac{\Fc_{\mathcal{C}_2}(p_0)}{\Fc_{\bz-\mathcal{C}_1}(p_0)} \\  
& = \sum_{\bz} \Fc_{\bz-\mathcal{C}_1}(p_0) \log_2 \Fc_{\mathcal{C}_2}(p_0) - \sum_{\bz} \Fc_{\bz-\mathcal{C}_1}(p_0) \log_2 \Fc_{\bz-\mathcal{C}_1}(p_0) \\
& = \log_2 \Fc_{\mathcal{C}_2}(p_0) + H(\Fc_{\bz-\mathcal{C}_1}(p_0)) \le \log_2 \Fc_{\mathcal{C}_2}(p_0) + 2N \\
& = \log_2 \left[4^N \Fc_{\mathcal{C}_2}(p_0)\right].
\end{align*}
\end{proof}
Based on (\ref{pc}), (\ref{SL2}), Lemma \ref{l3}, and Lemma \ref{l4}, we get $\mbox{SL} \le \log_2 \left[\frac{4^N}{|\mathcal{C}_2|} \sum_{\bv \in \mathcal{C}_2} {p_0}^{N-w(\bv)}\left(\frac{1-p_0}{3}\right)^{w(\bv)}\right]$ for all $p_0 \in \left[\frac{1}{4},1\right]$.

\ifCLASSOPTIONcaptionsoff
  \newpage
\fi

\bibliographystyle{IEEEtran}
\bibliography{reference}

\end{document}